\let\old@ssect\@ssect 
\def\endfigure{\end@float}
\def\endtable{\end@float}
\newtheoremstyle{boldtheorem}
  {\topsep}   
  {\topsep}   
  {\normalfont}
  {}          
  {\bfseries} 
  {.}         
  {.5em}      
  {}          
\theoremstyle{boldtheorem}
\newtheorem{thm}{Theorem}
\newtheorem{rem}{Remark}%
\newtheorem{cor}{Corollary}%
\newtheorem{assum}{Assumption}%
\title{\LARGE \bf The Analysis and the Performance of the Parallel-Partial Reset Control System}
\author{Xinxin Zhang and S. Hassan HosseinNia*
\thanks{* Corresponding Author.}
\thanks{This work was not supported by any organization.}
\thanks{Xinxin Zhang and S. Hassan HosseinNia are with the Department of Precision and Microsystems Engineering (PME), Delft University of Technology, Mekelweg 2, 2828 CD, Delft, The Netherlands, with Emails:
        {\tt\small X.Zhang-15@tudelft.nl, S.H.HosseinNiaKani@tudelft.nl.}}%
}
\begin{document}

\maketitle
\thispagestyle{empty}
\pagestyle{empty}

\begin{abstract}
Reset controllers have demonstrated their effectiveness in enhancing performance in precision motion systems. To further exploiting the potential of reset controllers, this study introduces a parallel-partial reset control structure. Frequency response analysis is effective for the design and fine-tuning of controllers in industries. However, conducting frequency response analysis for reset control systems poses challenges due to their nonlinearities. We develop frequency response analysis methods for both the open-loop and closed-loop parallel-partial reset systems. Simulation results validate the accuracy of the analysis methods, showcasing precision enhancements exceeding 100\% compared to the traditional describing function method. Furthermore, we design a parallel-partial reset controller within the Proportional–Integral–Derivative (PID) control structure for a mass-spring-damper system. The frequency response analysis of the designed system indicates that, while maintaining the same bandwidth and phase margin of the first-order harmonics, the new system exhibits lower magnitudes of higher-order harmonics, compared to the traditional reset system. Moreover, simulation results demonstrate that the new system achieves lower overshoot and quicker settling time compared to both the traditional reset and linear systems.
\end{abstract}

\section{Introduction}
Reset controllers have demonstrated their ability to overcome the phase-gain limitation and waterbed effects inherent in linear control systems \cite{chen2019development}. The first reset element, known as the Clegg Integrator \cite{clegg1958nonlinear}, exhibits the same gain characteristic as the linear integrator. However, it offers a 52$\degree$ phase lead, as determined by describing function analysis. This property enables reset controllers to exhibit superior tracking and disturbance rejection benefits compared to their linear counterparts in various applications, including semiconductor manufacturing, chemical processes, and precision motion systems \cite{chen2019development, zheng2000experimental, hazeleger2016second, le2021passive, saikumar2019constant, palanikumar2018no}. 

Nevertheless, it is essential to acknowledge that reset control, while overcoming linear trade-offs, introduces higher-order harmonics due to its inherent nonlinearity, which can lead to undesirable performance, such as the limit cycle \cite{banos2012reset}. To mitigate the excessive influence of nonlinearity, various forms of reset controllers, such as partial reset controllers \cite{banos2012reset} and the ``Proportional-Integrator + Clegg Integrator (PI+CI)" configuration \cite{banos2007definition}, have been developed. To the best of our knowledge, the combination of linear and reset elements in parallel configuration is primarily explored within the context of the ``PI+CI" system. There remains significant potential to explore this structure's applicability with general reset elements.

Frequency response analysis is used to examine the phase and gain properties of a control system to sinusoidal inputs across various frequencies. It serves as an effective tool for engineers to design and fine-tune controllers to meet specific performance requirements in industrial applications. For designing and implementing the ``PI+CI" control system in the frequency domain, frequency response analysis is needed. Currently, classical Describing Function (DF) analysis is commonly utilized to analyze such the open-loop ``PI+CI" control system. However, DF analysis typically only considers the first-order harmonic, which will lead to analytical inaccuracies. Moreover, there is an absence of precise frequency response analysis methods tailored for closed-loop ``PI+CI" systems. 

Addressing these gaps in research motivation, this paper makes the following contributions:
\begin{itemize}
    \item We introduce a novel control system structure termed Parallel-Partial-Reset Control Systems (PP-RCSs), which expands upon the conventional parallel ``PI+CI" configuration to accommodate a wider variety of reset control systems.
    \item A Higher-Order Sinusoidal Input Describing Function (HOSIDF) to analyze the frequency response of open-loop and closed-loop PP-RCSs to sinusoidal reference inputs is developed, building upon our previous research for the traditional reset system.
    \item We design a PP-RCS for a mass-spring-damper system. The frequency response analysis shows that the PP-RCS significantly reduces the magnitudes of higher-order harmonics in the traditional reset system. Additionally, the PP-RCS exhibits superior transient response compared to traditional reset and linear systems.
\end{itemize}
This paper is structured into six sections. In Section \ref{sec:preliminaries}, fundamental background of reset systems are provided. Section \ref{sec: PF} outlines the research problems addressed. Section \ref{sec:methodology1} elaborates on the frequency-domain analysis of PP-RCS, covering scenarios from open-loop to closed-loop. Simulation results illustrate a substantial accuracy improvement of the new analysis method compared to the DF method. In Section \ref{sec: conclusion}, a PP-RCS is designed on a precision motion stage. Simulations highlight the better performance achieved by PP-RCSs compared to traditional reset control systems. Finally, the key findings of this paper are summarized, and future research prospects are outlined in Section \ref{sec: conclusion}.

\section{Background}
\label{sec:preliminaries}
\subsection{Definition of the Reset Controller}
Figure \ref{fig1:rcs} depicts the block diagram of a closed-loop reset system, comprising a reset controller denoted as $\mathcal{C}$, a linear controller $\mathcal{C}_2$, and a plant $\mathcal{P}$. The signals $r(t)$, $e(t)$, $u(t)$, and $y(t)$ are the reference input, error, control input, and system output signals respectively. 
\begin{figure}[htp]
	\centerline{\includegraphics[width=0.7\columnwidth]{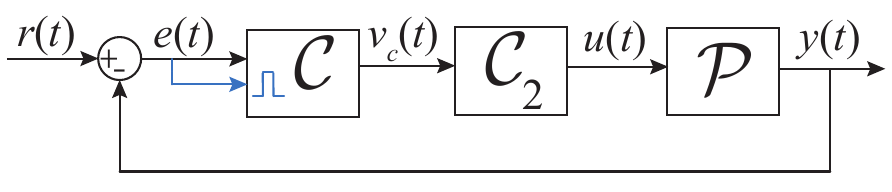}}
	\caption{The block diagram of the RCS, where the blue lines represent the reset action.}
	\label{fig1:rcs}
\end{figure}

The reset controller $\mathcal{C}$ is a linear time-invariant (LTI) system encompassing with a reset mechanism, which is triggered by the input signal $e(t)$. The state-space representative for the $\mathcal{C}$ is described as below:
\begin{equation} 
	\label{eq: State-Space eq of RC} 
	\mathcal{C} = \begin{cases}\dot{x}_r(t) = A_Rx_r(t) + B_Re(t), & t\notin J \\
		x_r(t^+) = A_\rho I x_r(t), & t\in J\\
		v_c(t) = C_Rx_r(t) + D_Re(t),\end{cases}
\end{equation} 
where $x_r(t) \in \mathbb{R}^{n_c}$ is the state of the RC and $n_c$ is the number of the states of the RC. $A_R$, $B_R$, $C_R$, $D_R$ are the state-space matrices of the base linear controller denoted as $\mathcal{C}_{bl}$, given by
\begin{equation}
\label{eq: RL}
C_{bl}(\omega) =  C_R(j\omega I-A_R)^{-1}B_R+D_R,\ (j = \sqrt{-1}).
\end{equation}
The reset controller $\mathcal{C}$ employs the ``zero-crossing law" as its reset mechanism, resetting specific states of $\mathcal{C}$ to zero when the reset triggered signal $e(t)$ crosses zero \cite{banos2012reset, li2010reset}. At the reset instant $t_i (i\in \mathbb{Z}^+)$, the reset action is activated when $e(t_i) = 0$. The set of reset instants $t_i$ is denoted as $J:=\{t_i|e(t_i)=0, i\in \mathbb{Z}^+\}$. The second equation describes the jump of the state $x_r$ from $t_i$ to $t_i^+$. $A_\rho$ represents the reset matrix, defined as:
\begin{equation}
\label{eq:gamma_def}
		A_\rho =
		\begin{bmatrix}
			A_{\rho\gamma} & \\
			& I_{n_l}
		\end{bmatrix}, A_{\rho\gamma} = \textrm{diag}(\gamma_1, \gamma_2,\cdots, \gamma_o, \cdots, \gamma_{n_r}).
\end{equation}
In \eqref{eq:gamma_def}, $\gamma_{o} \in (-1, 1)$, where $o\in\mathbb{Z}^+$, represents the reset value. The subscript $n_r$ denotes the number of reset states, while $n_l$ indicates the number of linear states, resulting in a total of $n_c = n_r + n_l$. For the scope of this study, our focus lies on reset controllers featuring a single reset state, specifically when $n_r=1$. Such reset controllers with single reset states are commonly encountered in literature, examples of which include the Clegg Integrator (CI), the First-order Reset Element (FORE) \cite{horowitz1975non}, the Second-order Reset Element (SORE) \cite{hazeleger2016second} with the reset applied to the first state, etc.

\subsection{Frequency Response Analysis for the Reset Controller}
A reset controller \eqref{eq: State-Space eq of RC} with an input signal of $e(t) = |E|\sin(\omega t + \angle E)$ is asymptotically stable with $2\pi/\omega$-periodic solution and convergent if and only if \cite{guo2009frequency}:
\begin{equation}
\label{eq:open-loop stability}
    |\lambda (D_Re^{A_R}\delta)|<1,\ \forall \delta \in \mathbb{R}^+.
\end{equation} 
Since the frequency response analysis necessitates system stability and convergence, we introduce the following assumption:
\begin{assum}
\label{open-loop stability}
The reset controller \eqref{eq: State-Space eq of RC} with input $e(t) = |E|\sin(\omega t + \angle E)$ is assumed to meet the condition in \eqref{eq:open-loop stability}.
\end{assum}
Consider a reset controller $\mathcal{C}$ \eqref{eq: State-Space eq of RC} with input signal $e(t) = |E|\sin(\omega t+\angle E_1)$, satisfying Assumption \ref{open-loop stability}. Utilizing the ``Virtual Harmonic Generator" \cite{heinen2018frequency}, the input signal $e(t)$ generates $n\in\mathbb{N}$ harmonics $e_{1n}(t) = |E|\sin(n\omega t+n\angle E_1)$. There are $n$ harmonics in the output signal $v_c(t)$, given by $v_c(t)=\sum_{n=1}^{\infty}v_c^n(t)$. The function $H_n(\omega)$ represents the transfer function from $e_{1n}(t)$ to $v_c^n(t)$, which is provided in \cite{heinen2018frequency, saikumar2021loop} and given by
\begin{equation}
	\label{eq: HOSIDF}
	\resizebox{1\hsize}{!}{$
\begin{aligned}
    H_n(\omega)=
    \begin{cases}
			C_R(j\omega I-A_R)^{-1}(I+j\Theta _D(\omega))B_R+D_R, & \text{for}\ n=1\\
			C_R(jn\omega I-A_R)^{-1}j\Theta _D(\omega)B_R, & \text{for odd} \ n > 1\\
			0, & \text{for even } n \geqslant 2
   \end{cases}
\end{aligned}
   		$}
\end{equation}
with
\begin{equation}
\label{Theta_D}
	\begin{aligned}
		\Lambda(\omega) &= \omega ^2I+{A_R}^2,\\
		\Delta(\omega) &= I+e^{(\frac{\pi}{\omega}A_R)},\\
		\Delta _r(\omega) &= I+A_{\rho}e^{(\frac{\pi}{\omega}A_R)},\\
		\Gamma _r(\omega) &= \Delta _r^{-1}(\omega)A_{\rho}\Delta(\omega)\Lambda ^ {-1}(\omega),\\
		\Theta _D(\omega) &= \frac{-2\omega ^2}{\pi}\Delta(\omega)[\Gamma _r(\omega)-\Lambda ^ {-1}(\omega)].
	\end{aligned} 	
\end{equation}

\section{Problem Statement}
\label{sec: PF}
The block diagram of a Parallel-Partial Reset Control System (PP-RCS) is depicted in Fig. \ref{fig1:PP-RC system}. It comprises a control structure $\mathcal{H}_{p}$, a linear controller $\mathcal{C}_2$, and a plant $\mathcal{P}$.

The control element $\mathcal{H}_{p}$ integrates a reset controller $\mathcal{C}$ \eqref{eq: State-Space eq of RC} and its corresponding base-linear controller $\mathcal{C}_{bl}$ \eqref{eq: RL} configured in parallel, with proportions of $k_{rc}$ and $1-k_{rc}$, respectively, given by
\begin{equation}
\label{eq: Hpp}
\begin{aligned}
    \mathcal{H}_{p} &= \mathcal{H}_{rc} + \mathcal{H}_{bl},\\
    \mathcal{H}_{rc} &= k_{rc}\mathcal{C},\ k_{rc} \in [0,1] \in \mathbb{R},\\
    \mathcal{H}_{bl} &= (1-k_{rc})\mathcal{C}_{bl}.   
\end{aligned}
\end{equation}
\begin{figure}[h]
	\centerline{\includegraphics[width=0.8\columnwidth]{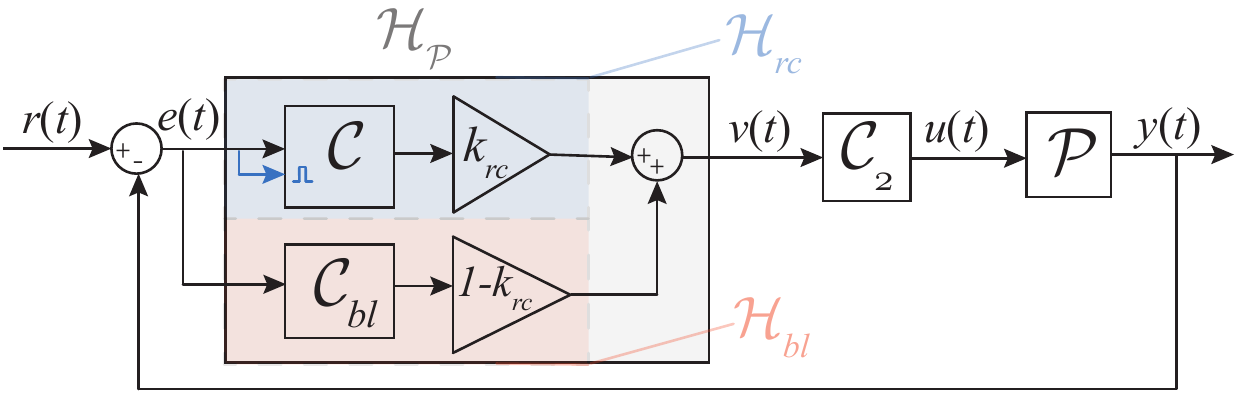}}
	\caption{The block diagram of the PP-RCS, where $r(t)$, $e(t)$, $v(t)$, $u(t)$, and $y(t)$ are the reference input, error, reset output, control input, and system output signals respectively. The blue lines represent the reset action.}
	\label{fig1:PP-RC system}
\end{figure}

The objective of this research is to develop a methodology for analyzing the frequency response of the PP-RCS to sinusoidal inputs, encompassing both its phase and gain characteristics across various frequencies. This analysis aims to provide insights into the steady-state behavior of the system and serve as a tool for designing the PP-RCS to meet specific specifications.

Let the set $\mathscr{L}_2$ consists of all measurable functions $f(\cdot):\mathbb{R}_+\to \mathbb{R}$ such that $\int_0^\infty|f(t)|^2 dt <\infty$, being the $\mathscr{L}_2$-norm $||\cdot||:\mathscr{L}_2\to\mathbb{R}_+$ defined by $||f|| = \sqrt{\int_0^\infty|f(t)|^2 dt}$. 

Note that the stability and convergence for reset systems can be ensured through thorough design practices, which are not the primary focus of this research. However, to ensure the existence of the frequency response of the closed-loop reset system, the following assumption is made to guarantee stability \cite{beker2004fundamental} and convergence \cite{dastjerdi2020frequency} of the closed-loop reset system:
\begin{assum}
\label{assum: stable}
The closed-loop PP-RCS is assumed to be $\mathscr{L}_2$-stable, the initial condition of the reset controller $\mathcal{C}$ is zero, there are infinitely many reset instants $t_i$ with $\lim_{t_i \to \infty} = \infty$, the input signal $r(t) = |R|\sin(\omega t)$ is a Bohl function \cite{barabanov2001bohl}, and there is no Zenoness behaviour.
\end{assum}

Under Assumption \ref{assum: stable}, as per \cite{pavlov2006uniform}, in Fig. \ref{fig1:PP-RC system}, $e(t)$, $v(t)$, $u(t)$, and $y(t)$ are periodic signals with the same fundamental frequency as $r(t)$ and can be expressed as follows:
\begin{equation}
\label{eq: e,y,u}
    \begin{aligned}
        e(t) &= \sum\nolimits_{n=1}^{\infty}e_n(t) = \sum\nolimits_{n=1}^{\infty} |E_n|\sin(n\omega t+\angle E_n),
    \end{aligned}
\end{equation}
where $\angle E_n\in(-\pi,\pi]$. The Fourier transform of $e_n(t)$ is $E_{n}(\omega)=\mathscr{F}[e_n(t)]$.

The closed-loop reset systems under a sinusoidal input with more than two reset instants possess excessive higher-order harmonics compared to systems with only two reset instants \cite{saikumar2021loop}. However, such effects can be mitigated through careful design considerations \cite{karbasizadeh2022band}. Moreover, practical reset control systems commonly employ systems with two reset instants \cite{banos2012reset}. Building on these observations, we introduce the following assumption:
\begin{assum}
\label{assum:2reset}
There are two reset instants in a SISO closed-loop PP-RCS with a sinusoidal reference input signal $r(t) = |R|\sin(\omega t)$ at steady-state, where the reset-triggered signal is $e_1(t)$. 
\end{assum}

Under Assumptions \ref{assum: stable} and \ref{assum:2reset}, the reset actions in the closed-loop SISO PP-RCS to the sinusoidal input $r(t) = |R|\sin(\omega t)$ occur when $e_{1}(t) = |E_1|\sin(\omega t + \angle E_1)= 0$. The set of reset instants for this closed-loop reset system is denoted as $J_m := \{t_m = (m\pi - \angle E_1)/\omega | m \in \mathbb{Z}^+\}$. 

Consider a SISO PP-RCS with $k_{rc}=1$, subjected to a reference input signal $r(t) = |R|\sin(\omega t)$, and under Assumptions \ref{assum: stable} and \ref{assum:2reset}. 

From \eqref{eq: HOSIDF}, The $n$-th open-loop transfer function of the PP-RCS with $k_{rc}=1$ denoted as $\mathcal{L}_n$ is given by
\begin{equation}
\label{Lo}
   \mathcal{L}_n(\omega)= H_n(\omega)\mathcal{C}_{2}(n\omega)\mathcal{P}(n\omega).
\end{equation}
Define 
\begin{equation}
\begin{aligned}
\label{eq: Cro defn}
\mathcal{C}_{nl}(\omega) &= H_1(\omega) -\mathcal{C}_{bl}(\omega), \\
\mathcal{C}_{nl}(n\omega) &= H_n(\omega), \text{ for }n>1.
\end{aligned}
\end{equation}
Then, the open-loop transfer function of the PP-RCS with $k_{rc}=1$ in \eqref{Lo} is separated into linear and nonlinear elements, given by:
\begin{equation}
\label{rcs:Lbl,Lnl}
    \begin{aligned}
\mathcal{L}_{bl}(n\omega) &= \mathcal{C}_{bl}(n\omega)\mathcal{C}_{2}(n\omega)\mathcal{P}(n\omega),\\
\mathcal{L}_{nl}(n\omega) &= \mathcal{C}_{nl}(n\omega)\mathcal{C}_{2}(n\omega)\mathcal{P}(n\omega),\\  
\mathcal{C}_{nl}(n\omega) &= C_R(jn\omega I-A_R)^{-1}j\Theta _D(\omega)B_R.
    \end{aligned}
\end{equation}
Utilizing the ``Virtual Harmonic Generator," the input signal $r(t)$ generates $n$ harmonics $r_n(t) = |R|\sin (n\omega t)$, with $R_n(\omega) = \mathscr{F}[r_n(t)]$. In our previous research \cite{zhang2022frequency}, we derived the $n^{\text{th}}$ sensitivity function $\mathcal{S}_n(\omega)$ for the PP-RCS with $k_{rc}=1$. It is defined as $\mathcal{S}_n(\omega)={E_n(\omega)}/{R_n(\omega)}$ and can be obtained from the open-loop transfer functions in \eqref{rcs:Lbl,Lnl}, given by:
 \begin{equation}
	 	\label{eq: sensitivity functions in CL1}
   \begin{aligned}
	\mathcal{S}_n(\omega)
	   &=\begin{cases}
	 	\frac{1}{1+\mathcal{L}_{o}(\omega)} , & \text{for } n=1\\
	 	-\frac{\Gamma(\omega)\mathcal{L}_{nl}(n\omega)|\mathcal{S}_{1}(\omega)|e^{jn\angle \mathcal{S}_{1}(\omega)}}{1+\mathcal{L}_{bl}(n\omega)} , & \text{for odd} \ n > 1\\
	0, & \text{for even} \ n \geqslant 2 \end{cases}       
   \end{aligned}
	 \end{equation}
  where
 \begin{equation}
\label{eq: Gamma for PP-RCS}
\begin{aligned}
\Delta_l(n\omega) &= (jn\omega I-A_R)^{-1}B_R,\\
\Delta_c(\omega) &= \lvert\Delta_l(\omega)\rvert \sin(\angle \Delta_l(\omega)),\\
\mathcal{L}_{o}(n\omega) &= \mathcal{L}_{bl}(n\omega) + \Gamma(\omega)\mathcal{L}_{nl}(n\omega),\\
\Psi_n(\omega) &= {|\mathcal{L}_{nl}(n\omega)|}/{|1+\mathcal{L}_{bl}(n\omega)|},\\
\Gamma(\omega) &= 1/[1-{\sum\nolimits_{n=3}^{\infty}\Psi_n(\omega)\Delta_c^n(\omega)}/{\Delta_c(\omega)}],\\
\Delta_c^n(\omega) &= -|\Delta_l(n\omega)| \sin(\angle\mathcal{L}_{nl}(n\omega)-\angle (1+\\
&\indent\indent\indent\indent\indent \indent \indent\indent \mathcal{L}_{bl}(n\omega)) + \angle \Delta_l(n\omega)).
\end{aligned}
\end{equation}   
The frequency response analysis for the PP-RCS with $k_{rc}\in [0, 1]$ is presented in Section \ref{sec:methodology1}.
\section{Frequency Response Analysis for the Parallel-Partial Reset Control System}
\label{sec:methodology1}
In this section, Theorem \ref{thm: Open-loop HOSIDF for PP-RCS} and Corollary \ref{thm: Closed-loop HOSIDF for PP-RCS} propose the Higher-order sinusoidal input describing function (HOSIDF) for open-loop and closed-loop PP-RCSs, respectively. 
\subsection{Frequency Response Analysis for the Open-loop and Closed-loop PP-RCS}
Consider a parallel-partial reset controller (PP-RC) $\mathcal{H}_p$ as depicted in Fig. \ref{fig1:PP-RC system}, operating under the sinusoidal input signal $e_1(t) = |E_1|\sin (\omega t + \angle E_1)$, where $\angle E_1\in[-\pi,\pi)$, and under the Assumption \ref{open-loop stability}. The steady-state output signal of $\mathcal{H}_p$ is denoted as $v_1(t)$, which comprises $n\in\mathbb{N}$ harmonics, represented as $v_1(t) = \sum\nolimits_{n=1}^{\infty}v_{1n}(t)$. The block diagram of $\mathcal{H}_p$ is illustrated in Fig. \ref{fig1:PP-RC system2}. By utilizing the ``Virtual Harmonic Generator" \cite{saikumar2021loop}, the input signal $e_1(t)$ generates $n$ harmonics, denoted as $e_{1n}(t) = |E_1|\sin (n\omega t + n\angle E_1)$.
   \begin{figure}[h]
	\centerline{\includegraphics[width=0.65\columnwidth]{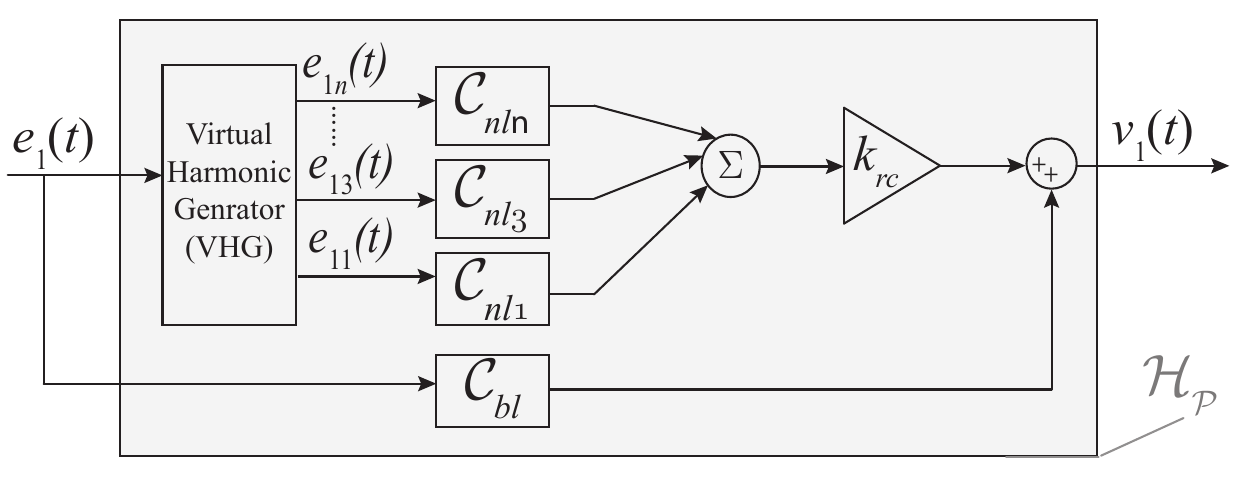}}
	\caption{The block diagram of the parallel-partial reset controller (PP-RC) $\mathcal{H}_p$.}
	\label{fig1:PP-RC system2}
\end{figure}
\begin{thm} (Frequency Response Analysis for the Open-loop PP-RCS)
\label{thm: Open-loop HOSIDF for PP-RCS}
The $n$-th Higher-order sinusoidal input describing function (HOSIDF) for $\mathcal{H}_p$, defined as $H_{p}^n(\omega)$, represents the transfer function from $E_{1n}(\omega) = \mathscr{F}[e_{1n}(t)]$ to $V_{1n}(\omega) = \mathscr{F}[v_{1n}(t)]$, and is given by:
    \begin{equation}
    \label{eq: H_{p}^n}
    H_{p}^n(\omega) = \frac{V_{1n}(\omega)}{E_{1n}(\omega)}=
        \begin{cases}
	 	\mathcal{C}_{bl}(\omega) + k_{rc} \mathcal{C}_{nl}(\omega), & \text{for } n=1\\
	 	k_{rc} \mathcal{C}_{nl}(n\omega), & \text{for odd} \ n > 1\\
	 	0, & \text{for even} \ n \geqslant 2 
        \end{cases}
    \end{equation}
    where $\mathcal{C}_{bl}(\omega)$ is given in \eqref{eq: RL} and $\mathcal{C}_{nl}(n\omega)$ is given in \eqref{rcs:Lbl,Lnl}.
\end{thm}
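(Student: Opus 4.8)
The plan is to exploit the linearity of the parallel interconnection in Fig.~\ref{fig1:PP-RC system2} together with the already-established HOSIDF $H_n(\omega)$ for the bare reset controller $\mathcal{C}$ in \eqref{eq: HOSIDF}. First I would observe that, because $\mathcal{H}_p = \mathcal{H}_{rc} + \mathcal{H}_{bl}$ in \eqref{eq: Hpp} is a sum of two blocks driven by the \emph{same} input $e_1(t)$, the output decomposes as $v_1(t) = k_{rc}\,v_c(t) + (1-k_{rc})\,v_{bl}(t)$, where $v_c(t)$ is the reset controller's response to $e_1(t)$ and $v_{bl}(t)$ is the base-linear controller's response. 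Crucially, under Assumption~\ref{open-loop stability} the reset controller admits a $2\pi/\omega$-periodic steady-state solution, so both $v_c$ and $v_{bl}$ are periodic with fundamental frequency $\omega$ and each has a well-defined Fourier/harmonic decomposition; the decomposition of $v_1$ into $v_{1n}$ is then just the harmonic-wise sum.

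Next I would apply the Virtual Harmonic Generator construction to each branch separately. For the reset branch, the map from the $n$-th generated harmonic $e_{1n}(t)$ to the $n$-th output harmonic is exactly $k_{rc}H_n(\omega)$ by \eqref{eq: HOSIDF}. For the linear branch, an LTI system produces no new harmonics, so only the $n=1$ term survives and it contributes $(1-k_{rc})\,\mathcal{C}_{bl}(\omega)$ at the fundamental and zero at all higher harmonics. Summing branch-wise gives, for $n=1$,
\begin{equation}
\label{eq: n1 combine}
H_p^1(\omega) = (1-k_{rc})\mathcal{C}_{bl}(\omega) + k_{rc}H_1(\omega),
\end{equation}
and for $n>1$, $H_p^n(\omega) = k_{rc}H_n(\omega)$ (which is $0$ for even $n$ and $k_{rc}\,C_R(jn\omega I - A_R)^{-1}j\Theta_D(\omega)B_R$ for odd $n>1$).

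Finally I would rewrite \eqref{eq: n1 combine} using the definition $\mathcal{C}_{nl}(\omega) = H_1(\omega) - \mathcal{C}_{bl}(\omega)$ from \eqref{eq: Cro defn}: substituting $H_1(\omega) = \mathcal{C}_{bl}(\omega) + \mathcal{C}_{nl}(\omega)$ yields $H_p^1(\omega) = (1-k_{rc})\mathcal{C}_{bl}(\omega) + k_{rc}\mathcal{C}_{bl}(\omega) + k_{rc}\mathcal{C}_{nl}(\omega) = \mathcal{C}_{bl}(\omega) + k_{rc}\mathcal{C}_{nl}(\omega)$, which is the $n=1$ case of \eqref{eq: H_{p}^n}. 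For $n>1$, the identification $\mathcal{C}_{nl}(n\omega) = H_n(\omega)$ from \eqref{eq: Cro defn} (together with \eqref{rcs:Lbl,Lnl}) converts $k_{rc}H_n(\omega)$ into $k_{rc}\mathcal{C}_{nl}(n\omega)$, completing the three cases.

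The step I expect to require the most care is not an algebraic one but the justification that the parallel combination's steady-state output really is the harmonic-wise sum of the two branches' steady-state outputs — i.e.\ that feeding a single sinusoid into $\mathcal{H}_p$ and decomposing the periodic output is equivalent to decomposing each branch and adding. This is immediate once periodicity and convergence of the reset branch are in hand (Assumption~\ref{open-loop stability} via \eqref{eq:open-loop stability}), since Fourier decomposition is linear; the only subtlety is confirming that the reset instants of $\mathcal{H}_p$ under input $e_1(t)$ coincide with those of $\mathcal{C}$ alone, which holds because the reset is triggered by the zero-crossings of $e_1(t)$ and is unaffected by the parallel linear path. Everything else is bookkeeping with \eqref{eq: HOSIDF}, \eqref{eq: Cro defn}, and \eqref{rcs:Lbl,Lnl}.
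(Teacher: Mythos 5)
Your proposal is correct and follows essentially the same route as the paper: split $\mathcal{H}_p$ into the scaled reset branch (whose $n$-th harmonic map is $k_{rc}H_n(\omega)$, rewritten via $\mathcal{C}_{nl}$ from \eqref{eq: Cro defn}) and the scaled LTI branch (which contributes only at $n=1$), then add harmonic-wise. Your version is in fact somewhat more careful than the paper's, which leaves the branch-wise harmonic summation and the invariance of the reset instants implicit.
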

\begin{proof}
From \eqref{eq: Hpp} and \eqref{eq: Cro defn}, the $n$-th HOSIDF for $\mathcal{H}_{rc}$ is defined as
\begin{equation}
\label{eq: H_{rc}^n}
\mathcal{H}_{rc}^n(\omega) =
\begin{cases}
k_{rc} \mathcal{C}_{bl}(\omega) + k_{rc} \mathcal{C}_{nl}(\omega), & \text{for } n=1\\
k_{rc} \mathcal{C}_{nl}(n\omega), & \text{for odd} \ n > 1\\
0, & \text{for even} \ n \geqslant 2. 
\end{cases}
\end{equation}
The transfer function for $\mathcal{H}_{bl}$ is defined as
\begin{equation}
\label{eq: Hbl(w)}
\mathcal{H}_{bl}(\omega) =  (1-k_{rc})\mathcal{C}_{bl}(\omega).  
\end{equation}
From \eqref{eq: H_{rc}^n}, \eqref{eq: Hbl(w)}, and the definition of $\mathcal{H}_{p}$ in \eqref{eq: Hpp}, the first-order harmonic of $\mathcal{H}_{p}$ denoted as $\mathcal{H}_{p}^1(\omega)$ is given by
\begin{equation}
\label{eq: Hp1}
\begin{aligned}
\mathcal{H}_{p}^1(\omega) &=  \mathcal{H}_{rc}^1(\omega) + \mathcal{H}_{p}^1(\omega) \\
&=  \mathcal{C}_{bl}(\omega) + k_{rc} \mathcal{C}_{nl}(\omega).
\end{aligned}   
\end{equation}
The higher-order harmonics $n=2k+1>1,\ k\in\mathbb{N}$ of $\mathcal{H}_{p}$ denoted as $\mathcal{H}_{p}^n(\omega)$ is given by
\begin{equation}
\begin{aligned}
\label{eq: Hpn}
\mathcal{H}_{p}^n(\omega) = k_{rc} \mathcal{C}_{nl}(n\omega).
\end{aligned}   
\end{equation}
Combining \eqref{eq: Hp1} and \eqref{eq: Hpn}, the proof is concluded.
\end{proof}
\begin{rem}
Compared to the HOSIDF for the traditional reset controller $\mathcal{H}_{p}$ with $k_{rc}=1$ in \eqref{eq: HOSIDF}, the HOSIDF for $\mathcal{H}_{p}$ in \eqref{eq: H_{p}^n} introduces a new parameter $k_{rc}$ to adjust the nonlinearity $\mathcal{C}_{nl}(n\omega)$ \eqref{rcs:Lbl,Lnl} of the reset controller $\mathcal{C}$, while maintaining the linear element $\mathcal{C}_{bl}(\omega)$ \eqref{eq: RL} in $\mathcal{C}$ unchanged.    
\end{rem}
\begin{cor} (Frequency Response Analysis for the Closed-loop PP-RCS)
\label{thm: Closed-loop HOSIDF for PP-RCS}
Consider a SISO PP-RCS, as illustrated in Fig. \ref{fig1:PP-RC system}, which receives a sinusoidal input signal $r(t) = |R|\sin(\omega t)$, under Assumptions stated in \ref{assum: stable} and \ref{assum:2reset}. The n$^{\text{th}}$ sensitivity function $\mathcal{S}_n(\omega)={E_n(\omega)}/{R_n(\omega)}$ is defined as  
 \begin{equation}
	 	\label{eq: sensitivity functions for PP-RCS}
   \begin{aligned}
	\mathcal{S}_n(\omega)
	   &=\begin{cases}
	 	\frac{1}{1+\mathcal{L}_{o}(\omega)} , & \text{for } n=1\\
	 	-\frac{\Gamma(\omega)\mathcal{L}_{nl}(n\omega)|\mathcal{S}_{1}(\omega)|e^{jn\angle \mathcal{S}_{1}(\omega)}}{1+\mathcal{L}_{bl}(n\omega)} , & \text{for odd} \ n > 1\\
	0, & \text{for even} \ n \geqslant 2 \end{cases}       
   \end{aligned}
	 \end{equation}
  where
 \begin{equation}
\label{eq: Gamma for PP-RCS}
\begin{aligned}
\mathcal{L}_{bl}(n\omega) &= \mathcal{C}_{bl}(n\omega)\mathcal{C}_{2}(n\omega)\mathcal{P}(n\omega),\\
\mathcal{L}_{nl}(n\omega) &= k_{rc}\mathcal{C}_{nl}(n\omega)\mathcal{C}_{2}(n\omega)\mathcal{P}(n\omega),\\
		\end{aligned}
	\end{equation} 
Functions $ \mathcal{C}_{bl}(\omega)$, $ \mathcal{C}_{nl}(\omega)$, and $\Gamma(\omega)$ are given in \eqref{eq: RL} and \eqref{eq: Gamma for PP-RCS}.
\end{cor}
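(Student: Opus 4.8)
The plan is to derive the closed-loop result as a corollary of Theorem~\ref{thm: Open-loop HOSIDF for PP-RCS} together with the closed-loop sensitivity formula \eqref{eq: sensitivity functions in CL1}, which was established in \cite{zhang2022frequency} for the traditional reset system (the $k_{rc}=1$ case, with HOSIDF \eqref{eq: HOSIDF}). The guiding observation is that, by \eqref{eq: H_{p}^n}, the parallel-partial reset controller $\mathcal{H}_p$ has an open-loop HOSIDF of exactly the same \emph{shape} as a single-reset-state reset controller: a first harmonic of the ``base-linear $+$ nonlinear'' form and odd higher harmonics that are purely ``nonlinear,'' the only change being that the reset nonlinearity $\mathcal{C}_{nl}(n\omega)$ of \eqref{rcs:Lbl,Lnl} is replaced by its scaled version $k_{rc}\,\mathcal{C}_{nl}(n\omega)$, while the base-linear response $\mathcal{C}_{bl}(\omega)$ of \eqref{eq: RL} and the base-linear state-space data $(A_R,B_R)$ are left untouched. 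Consequently, for the purpose of the describing-function analysis, the loop of Fig.~\ref{fig1:PP-RC system} is a closed-loop reset system of the kind treated in \cite{zhang2022frequency}, but with the nonlinear open-loop branch replaced by $\mathcal{L}_{nl}(n\omega)=k_{rc}\,\mathcal{C}_{nl}(n\omega)\mathcal{C}_{2}(n\omega)\mathcal{P}(n\omega)$ and the linear branch $\mathcal{L}_{bl}(n\omega)=\mathcal{C}_{bl}(n\omega)\mathcal{C}_{2}(n\omega)\mathcal{P}(n\omega)$.

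The ordered steps would be: (i) invoke Assumptions~\ref{assum: stable} and \ref{assum:2reset} so that the closed-loop signals are periodic with the fundamental of $r(t)$ and the two reset instants of $\mathcal{H}_p$ are the zero crossings of $e_1(t)$, which is exactly the regime in which \eqref{eq: sensitivity functions in CL1} is valid; (ii) read off from \eqref{eq: H_{p}^n} the open-loop harmonic transfer functions $\mathcal{L}_{bl}(n\omega)$ and $\mathcal{L}_{nl}(n\omega)$ above; (iii) note that the auxiliary quantities $\Delta_l(n\omega)=(jn\omega I-A_R)^{-1}B_R$, $\Delta_c(\omega)$, $\Delta_c^n(\omega)$, $\Psi_n(\omega)$, $\mathcal{L}_o(n\omega)$ and $\Gamma(\omega)$ of \eqref{eq: Gamma for PP-RCS} are built only from $(A_R,B_R)$ and from $\mathcal{L}_{bl}$, $\mathcal{L}_{nl}$, and that $k_{rc}\ge 0$ enters only through the magnitude of $\mathcal{L}_{nl}$, so all of them retain their functional form under the substitution; (iv) substitute into \eqref{eq: sensitivity functions in CL1} to obtain \eqref{eq: sensitivity functions for PP-RCS}. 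The $n=1$ branch then yields $\mathcal{S}_1(\omega)=1/(1+\mathcal{L}_o(\omega))$ with the effective open loop $\mathcal{L}_o=\mathcal{L}_{bl}+\Gamma\mathcal{L}_{nl}$, the odd $n>1$ branch yields the stated expression in terms of $\mathcal{S}_1(\omega)$, and the even harmonics vanish because $\mathcal{C}_{nl}(n\omega)=0$ there.

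The part that needs genuine argument, rather than bookkeeping, is the justification of (i)--(ii): one must check that the closed-loop derivation in \cite{zhang2022frequency} uses the reset element only through (a) the location of its reset instants (the zero crossings of the reset-triggered signal) and (b) its open-loop HOSIDF decomposition into $\mathcal{C}_{bl}(\omega)$ and $\mathcal{C}_{nl}(n\omega)$, and not through any finer detail of the reset dynamics. Granting this, the non-resetting parallel branch $\mathcal{H}_{bl}=(1-k_{rc})\mathcal{C}_{bl}$ introduces no additional harmonic coupling --- it merely reweights the first harmonic, which is already captured by $H_p^1(\omega)$ --- and the inter-harmonic feedback factor $\Gamma(\omega)$, which encodes how the reset-generated higher harmonics of $e$ act back on the reset state at the switching instants, transforms solely through $\mathcal{L}_{nl}(n\omega)\mapsto k_{rc}\,\mathcal{C}_{nl}(n\omega)\mathcal{C}_{2}(n\omega)\mathcal{P}(n\omega)$. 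If one prefers not to rely on the internal structure of the earlier proof, the fallback is to repeat that derivation for $\mathcal{H}_p$: compute the jump of $x_r$ at each $t_m\in J_m$ as a function of $|E_1|$ and $\angle E_1$, propagate it (weighted by $k_{rc}$) together with the parallel linear branch through $\mathcal{C}_{2}$ and $\mathcal{P}$ to form $y$ and hence $e=r-y$ harmonic by harmonic, impose self-consistency of $e_1$ (which produces $\Gamma$), and solve for $\mathcal{S}_n=E_n/R_n$; every occurrence of $\mathcal{C}_{nl}$ then carries the factor $k_{rc}$, reproducing \eqref{eq: sensitivity functions for PP-RCS}.
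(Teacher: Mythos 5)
Your proposal follows essentially the same route as the paper's own proof: identify the PP-RC's open-loop HOSIDF decomposition from Theorem~\ref{thm: Open-loop HOSIDF for PP-RCS}, form $\mathcal{L}_{bl}(n\omega)=\mathcal{C}_{bl}(n\omega)\mathcal{C}_2(n\omega)\mathcal{P}(n\omega)$ and $\mathcal{L}_{nl}(n\omega)=k_{rc}\,\mathcal{C}_{nl}(n\omega)\mathcal{C}_2(n\omega)\mathcal{P}(n\omega)$, and substitute these into the closed-loop sensitivity formula \eqref{eq: sensitivity functions in CL1} established for the $k_{rc}=1$ case. Your step (iii) --- verifying that $\Gamma(\omega)$ and the other auxiliary quantities depend on the reset element only through $(A_R,B_R)$ and the $\mathcal{L}_{bl}$/$\mathcal{L}_{nl}$ split, so that the substitution is actually legitimate --- is exactly the point the paper's proof leaves implicit, and flagging it is the right thing to do.
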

\begin{proof}
As illustrated in Theorem \ref{thm: Open-loop HOSIDF for PP-RCS}, the HOSIDF of a PP-RC is divided into linear element $\mathcal{C}_{bl}(\omega)$ and nonlinear element $k_{rc}\mathcal{C}_{nl}(n\omega)$. The HOSIDF of the open-loop PP-RCS is then given by
\begin{equation}
\label{eq: L_n}
\mathcal{L}_n(\omega) =
\begin{cases}
k_{rc} \mathcal{L}_{bl}(\omega) + k_{rc} \mathcal{L}_{nl}(\omega), & \text{for } n=1\\
k_{rc} \mathcal{L}_{nl}(n\omega), & \text{for odd} \ n > 1\\
0, & \text{for even} \ n \geqslant 2 
\end{cases}    
\end{equation}
where 
\begin{equation}
\label{PP-RCS: Lbl,Lnl}
\begin{aligned}
\mathcal{L}_{bl}(n\omega) &= \mathcal{C}_{bl}(n\omega)\mathcal{C}_{2}(n\omega)\mathcal{P}(n\omega),\\
\mathcal{L}_{nl}(n\omega) &= K_{rc}\mathcal{C}_{nl}(n\omega)\mathcal{C}_{2}(n\omega)\mathcal{P}(n\omega).  
\end{aligned}
\end{equation}
By replacing the open-loop transfer functions for the PP-RCS from \eqref{rcs:Lbl,Lnl} with the new functions \eqref{PP-RCS: Lbl,Lnl}, and substituting them into the closed-loop sensitivity function \eqref{eq: sensitivity functions in CL1}, the proof is concluded.
\end{proof}
\begin{rem}
\label{rem: df, hosidf,et}
Consider a SISO PP-RCS, as illustrated in Fig. \ref{fig1:PP-RC system}, which receives a sinusoidal input signal $r(t) = |R|\sin(\omega t)$, under the assumptions stated in \ref{assum: stable} and \ref{assum:2reset}. 
The steady-state error signal $e(t)$ predicted by the DF in \cite{guo2009frequency} is defined as
    \begin{equation}
    \label{dfeq: y,e,u}
        \begin{aligned}
            e(t) &= \mathscr{F}^{-1}\left[{R_1(\omega)}/{(1+H_1(\omega))}\right].
        \end{aligned}
    \end{equation}  
The steady-state error signal $e(t)$ predicted by the HOSIDF in Corollary \ref{thm: Closed-loop HOSIDF for PP-RCS} is defined as
    \begin{equation}
    \label{hosidfeq: y,e,u}
        \begin{aligned}
            e(t) &= \sum\nolimits_{n=1}^{\infty}\mathscr{F}^{-1}\left[\mathcal{S}_n(\omega)R_n(\omega)\right].
        \end{aligned}
    \end{equation}   
\end{rem}
Due to the nonlinearity inherent in the reset system, the error signal $e(t)$ comprises an infinite number of harmonics. Equation \eqref{dfeq: y,e,u} only addresses the first-order harmonic, whereas equation \eqref{hosidfeq: y,e,u} encompasses higher-order harmonics (for $n>1$). This inclusion results in improved prediction accuracy, as demonstrated in Subsection \ref{subsec: validate hosidf}.

\subsection{Illustrative Example: The Accuracy of the Analysis Method for the PP-RCS}
\label{subsec: validate hosidf}
We employ a mass-spring-damper system (which is a one-degree precision motion stage shown in \cite{karbasizadeh2022band}) as the plant $\mathcal{P}$, whose transfer function is given by:
\begin{equation}
\label{eq:P(s)}
    \mathcal{P}(s) = \frac{6.615e5}{83.57s^2+279.4s+5.837e5}.
\end{equation}

The Proportional-Clegg-Integrator (PCI) controller is built by replacing the integrator in the Proportional-Integrator (PI) controller with the Clegg Integrator (CI). We have constructed a Parallel-Partial PCI-PID (PP-PCI-PID) controller, implemented on $\mathcal{P}(s)$ as depicted in Fig. \ref{fig:PCI_control_system_structure}. The integral frequency in the PCI controller is set to $\omega_i = 2\pi\cdot 15$ rad/s, with the proportional gain value $k_{rc}=0.7$, and the reset value $\gamma=-0.5$. The parameters in the PID structure are chosen to achieve a bandwidth of 150 Hz and a phase margin of 50$\degree$ in open-loop.
\begin{figure}[htp]
	\centering
	\includegraphics[width=0.7\columnwidth]{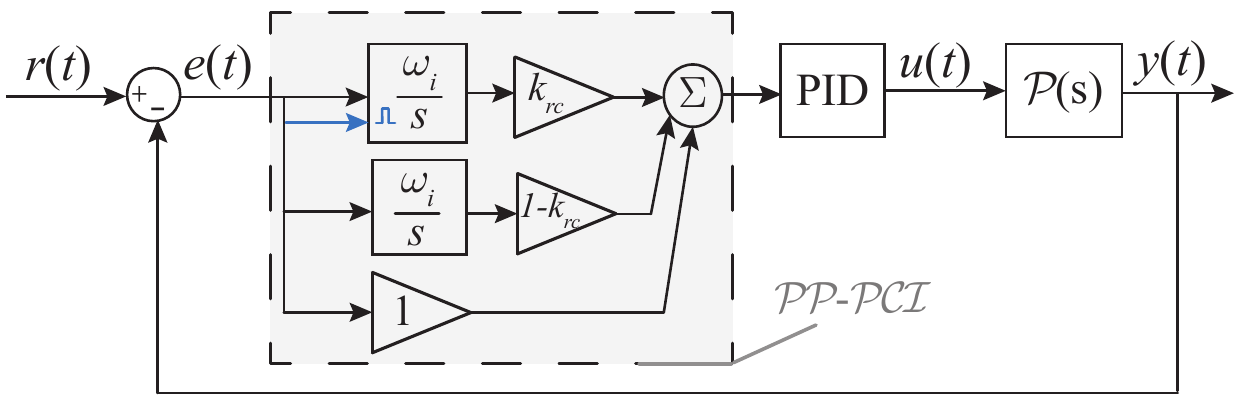}
	\caption{The PP-PCI-PID control system structure.}
	\label{fig:PCI_control_system_structure}
\end{figure}
\begin{figure}[htp]
	\centerline{\includegraphics[width=0.701\columnwidth]{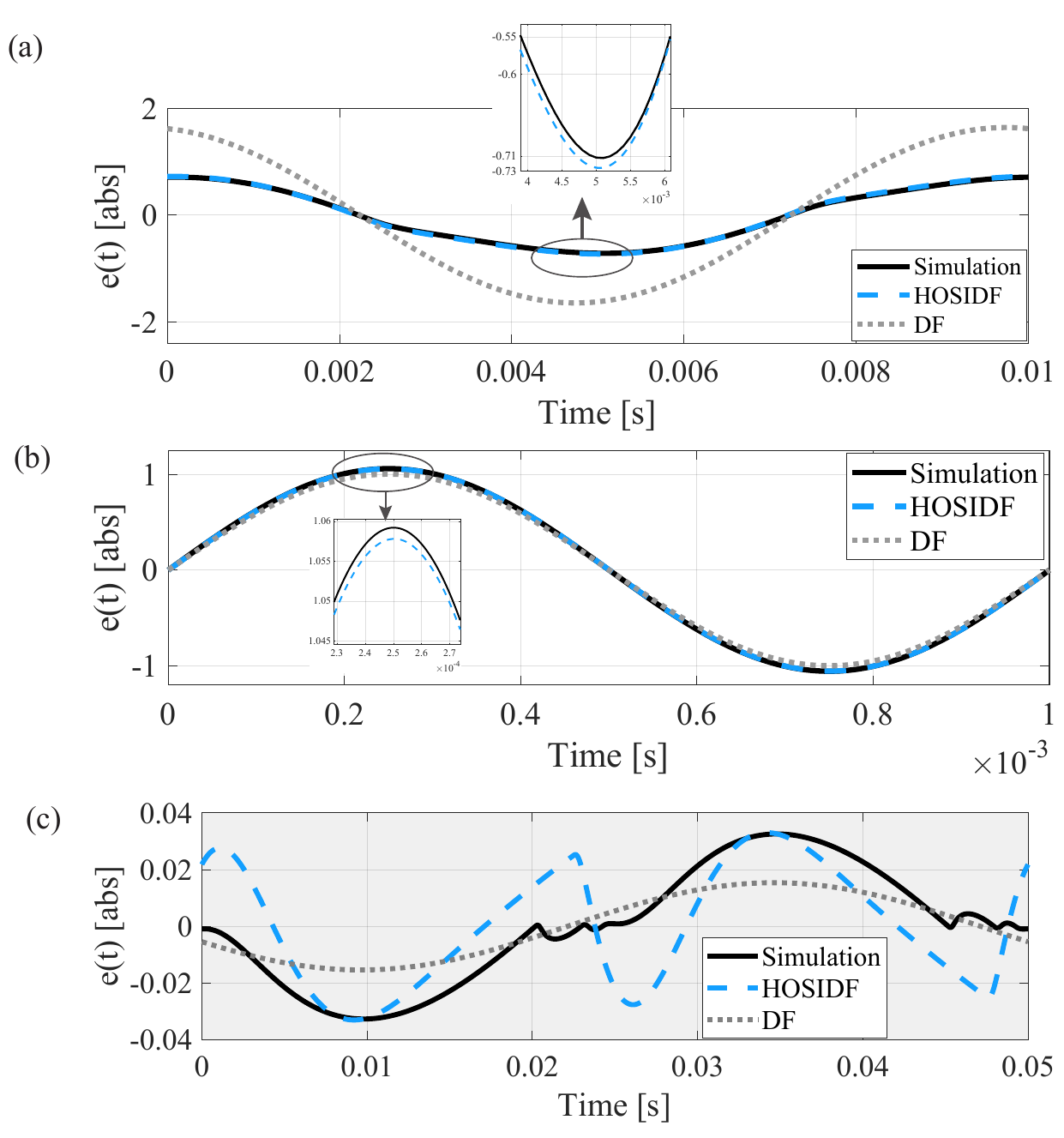}}
	\caption{The steady-state error $e(t)$ of the PP-PCI-PID system at input frequencies of (a) $f = 100$ Hz, (b) $f = 1000$ Hz, and (c) $f=20$ Hz, obtained from simulation, HOSIDF analysis \eqref{hosidfeq: y,e,u}, and DF analysis \eqref{dfeq: y,e,u}.}
	\label{fig: PCIID_n7_krcp5_et_final}
\end{figure}

Figure \ref{fig: PCIID_n7_krcp5_et_final} compares the simulated, equation \eqref{hosidfeq: y,e,u}(HOSIDF)-predicted, and the equation \eqref{dfeq: y,e,u}(DF)-predicted steady-state error $e(t)$ on the PP-PCI-PID system with a sinusoidal input signal $r(t) = \sin(2\pi f t), (\omega = 2\pi f)$ at the input frequencies $f$ of 100 Hz, 1000 Hz, and 20 Hz. From the comparison, two key conclusions can be drawn:

(1) Figure \ref{fig: PCIID_n7_krcp5_et_final}(a) and (b) show that the new HOSIDF method accurately predicts the steady-state error signal of the closed-loop PP-PCI-PID system. There's a significant improvement in accuracy achieved by the HOSIDF analysis compared to the DF analysis, which exceeds 100\% at an input frequency of 100 Hz. The enhanced precision of the new HOSIDF over the DF results from the HOSIDF method incorporating higher-order harmonics into the analysis, whereas the DF method only considers the first-order harmonic in the calculation process, in line with Remark \ref{rem: df, hosidf,et}.

(2) However, in Fig. \ref{fig: PCIID_n7_krcp5_et_final}(c), the HOSIDF inaccurately predicts the steady-state error of the system at 20 Hz due to the violation of Assumption \ref{assum:2reset}. In practical design, it's desirable to adhere to Assumption \ref{assum:2reset} and avoid scenarios with multiple reset instances, as they introduce excessive higher-order harmonics. We apply this example to show that in cases where Assumption \ref{assum:2reset} is not valid, the reliability of the HOSIDF analysis is compromised.
\section{The Performance of the PP-RCS}
\label{sec: result}
\subsection{The PP-CgLp-PI$^2$D control system}
The reset component ``Constant in Gain Lead in Phase" (CgLp) provides phase lead without compromising gain advantages \cite{saikumar2019constant}. We constructed a Parallel-Partial CgLp-PI$^2$D (PP-CgLp-PI$^2$D) control system, as illustrated in Fig. \ref{fig: CgLp_control_system_structure}, where PI$^2$D is the PID controller with an additional integrator. The parameter $\omega_c = 2\pi\cdot 150$ rad/s is utilized. The PI$^2$D is configured to achieve a bandwidth of 150 Hz and a phase margin of 50$\degree$ in open-loop. The transfer function $\mathcal{P}(s)$ is defined in \eqref{eq:P(s)}.
\begin{figure}[htp]
	\centering
	\includegraphics[width=0.7\columnwidth]{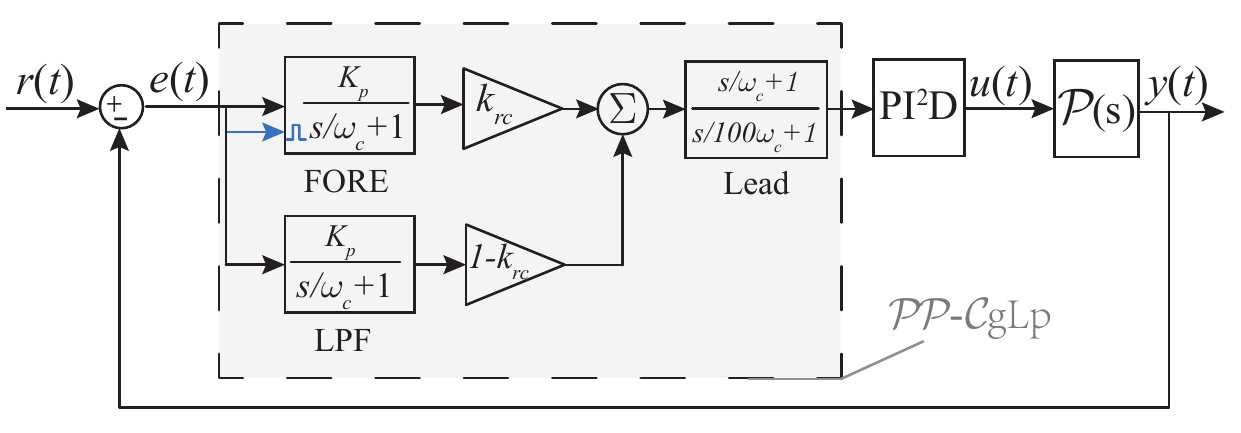}
	\caption{The PP-CgLp-PI$^2$D control system structure.}
	\label{fig: CgLp_control_system_structure}
\end{figure}

Three reset systems denoted as $C_1$, $C_2$, and $C_3$ within the PP-CgLp-PI$^2$D system structure were designed, each with different pairs of $(\gamma, k_{rc})$ values. The $(\gamma, k_{rc})$ parameters for $C_1$-$C_3$ are specified as follows: Linear system $C_1$: $(\gamma, k_{rc})$=$(1, 1)$, traditional reset system $C_2$: $(\gamma, k_{rc})$=$(-0.5, 1)$, and new PP-RCS $C_3$: $(\gamma, k_{rc})$=$(-0.5, 0.5)$. We maintain the same bandwidth of 150 Hz and phase margin of 50$\degree$ in the first-order harmonic for the open-loop systems $C_1$-$C_3$. These three systems under a sinusoidal input signal have approximately two reset instants per steady-state cycle across the entire frequency range, thereby satisfying Assumption \ref{assum:2reset}. Additionally, they have been verified to be stable and convergent, meeting Assumption \ref{assum: stable}.
\subsection{Open-loop Analysis for Systems $C_1$-$C_3$}
Figure \ref{fig: bode_plot_L} illustrates the open-loop HOSIDF $H_p^n(\omega)$ for $n=1$ and $n=3$ in $C_1$-$C_3$. For the first-order harmonics, despite maintaining the same phase margin and bandwidth, the reset control system $C_2$ exhibits slightly larger magnitudes of the first-order harmonics at low frequencies compared to systems $C_1$ and $C_3$. Concerning the higher-order harmonics, system $C_3$ demonstrates lower magnitudes of higher-order harmonics compared to $C_2$.
\begin{figure}[htp]
	\centerline{\includegraphics[width=0.4\textwidth]{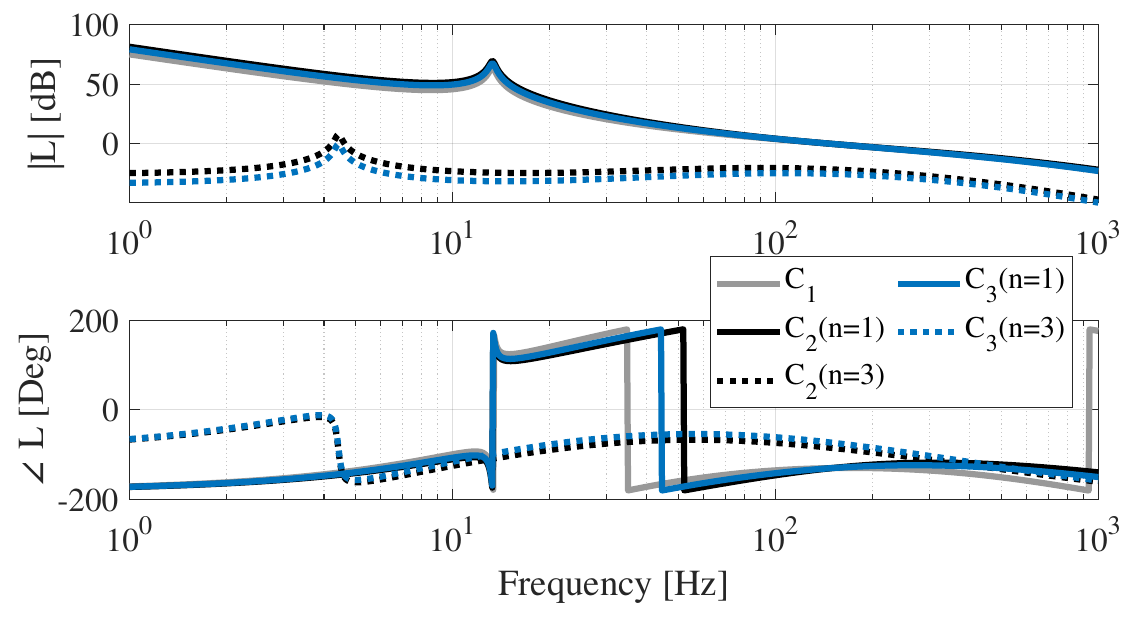}}
	\caption{The HOSIDF $H_p^n(\omega)$ for $n=1$ and $n=3$ in open-loop systems $C_1$-$C_3$.}
	\label{fig: bode_plot_L}
\end{figure}
\subsection{Closed-loop Analysis for Systems $C_1$-$C_3$}
Figure \ref{fig: closed_loop_SSE_rms_infty}(a) and (b) illustrate the $\mathscr{L}_2$ (root mean square $||e||_2$) and $\mathscr{L}_\infty$ ($||e||_\infty$) norms of the steady-state errors in systems $C_1$-$C_3$, predicted by Corollary \ref{thm: Closed-loop HOSIDF for PP-RCS}. Compared to the traditional reset system $C_2$, system $C_3$ exhibits similar performance at low frequencies, achieves better tracking at mid frequencies (ranging from 80 Hz to 150 Hz), with a marginal compromise at high frequencies.
Figure \ref{fig: closed_loop_SSE_rms_infty}(c) and (d) show the simulated steady-state errors at input frequencies of 1 Hz and 100 Hz, validating the prediction results in Fig. \ref{fig: closed_loop_SSE_rms_infty}(a) and (b). 
\begin{figure}[htp]
	\centerline{\includegraphics[width=0.4\textwidth]{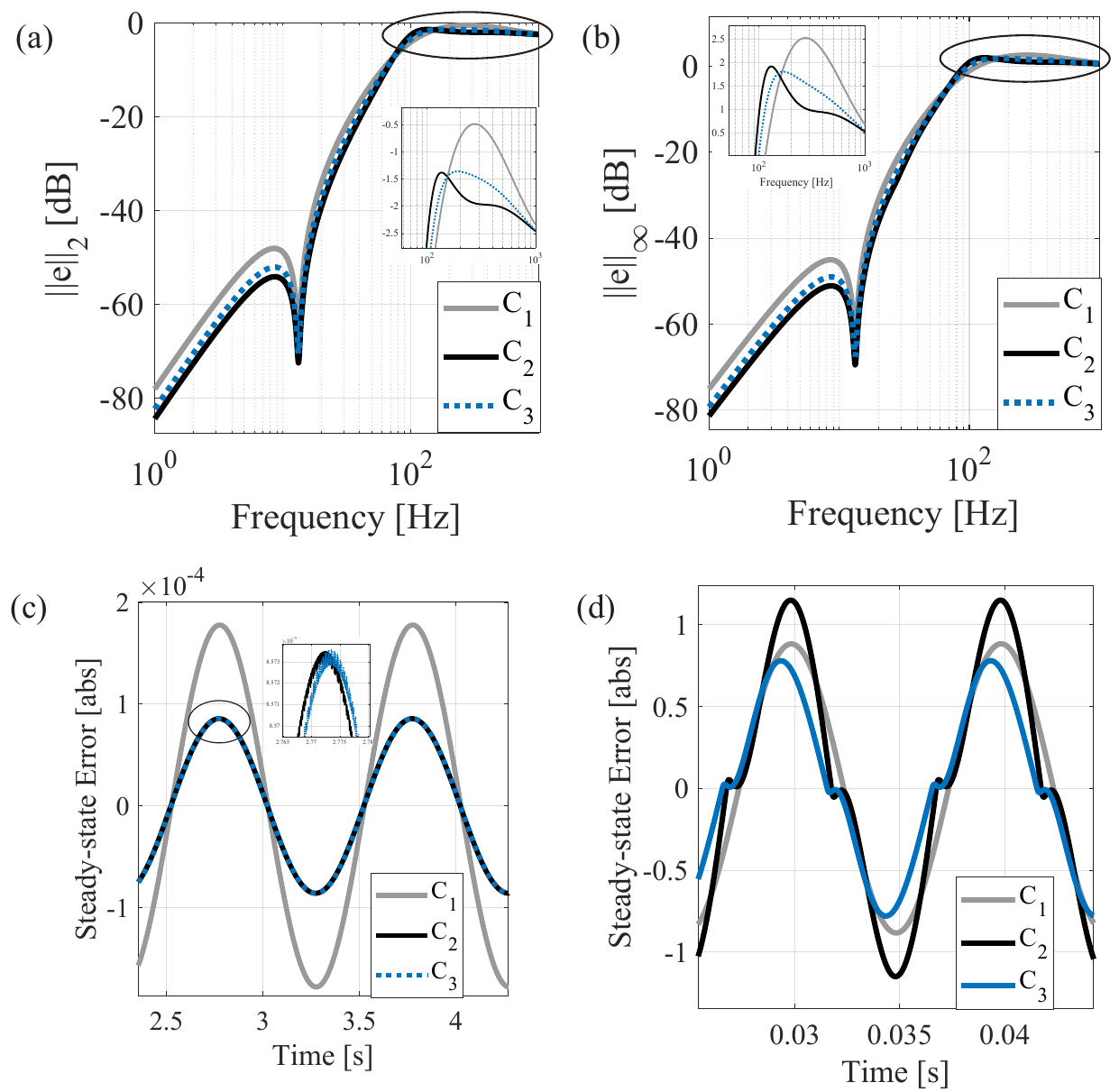}}
	\caption{(a) The $\mathscr{L}_2$ and (b) the $\mathscr{L}_\infty$ norms of steady-state errors in systems $C_1$-$C_3$. The simulated steady-state errors in $C_1$-$C_3$ at input frequencies of (c) 1 Hz and (d) 100 Hz.}
	\label{fig: closed_loop_SSE_rms_infty}
\end{figure}
\subsection{Transient Responses for Systems $C_1$-$C_3$ and Future Work}
Figure \ref{fig: yt_step_final} illustrates the step responses of these three systems. While reset system $C_2$ exhibits lower overshoot compared to the linear system $C_1$, it sacrifices settling time. System $C_3$ achieves the lowest overshoot while maintaining a similar settling time to the linear system $C_1$. The lower overshoot compared to system $C_2$ may be attributed to the fact that in open-loop, the magnitudes of higher-order harmonics of $C_3$ are lower, and in closed-loop, the maximum steady-state error of system $C_3$ is lower (as shown in Fig. \ref{fig: closed_loop_SSE_rms_infty}(b)). Further exploration of the analytical relationship between the transient response and frequency domain analysis is warranted in future research.
\begin{figure}[htp]
	\centerline{\includegraphics[width=0.4\textwidth]{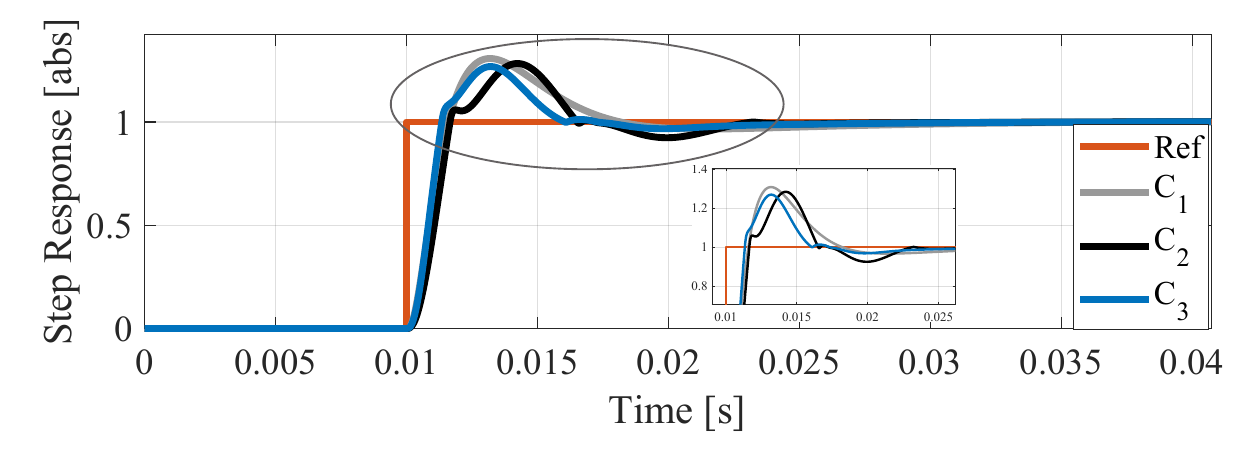}}
	\caption{The step responses for systems $C_1$-$C_3$.}
	\label{fig: yt_step_final}
\end{figure}
\section{Conclusion}
\label{sec: conclusion}
This study introduces a PP-RCS structure, which incorporates a new parameter $k_{rc}$ to adjust the nonlinearity of the reset system while keeping the linearity unchanged. To facilitate effective design in the frequency domain, this work presents frequency response analysis methods for both open-loop and closed-loop PP-RCSs. Simulation results show the enhanced accuracy achieved by the new analysis methods compared to the traditional DF method. An illustrative example on a mass-spring-damper system illustrates the improved transient responses of the PP-RCS compared to traditional reset and linear systems. For future work, experimental validation is warranted. Additionally, it is worthwhile to explore the relation between the transient responses of the PP-RCS and its frequency domain properties.

\bibliographystyle{unsrt}
\bibliography{References}


\end{document}